\newtheorem{theorem}{Theorem}
\newtheorem{lemma}[theorem]{Lemma}
\def\one{\mbox{1\hspace{-3.85pt}\fontsize{11}{14.4}\selectfont\textrm{1}}}
\title{\LARGE \bf
Differential Games in Spread of Covid-19
}
\author{Sushant Vijayan
\thanks{This work was supported by Department of Atomic Energy, Goverentment of India.}
\thanks{Sushant Vijayan is with School of Technology and Computer Science,
        Tata Institute of Fundamental Research, Mumbai, India
        {\tt\small sushant.vijayanq@tifr.res.in}}%
}
\begin{document}

\maketitle
\thispagestyle{empty}
\pagestyle{empty}

\begin{abstract}

Given the ongoing Covid-19 pandemic, it is of interest to understand
how the infections spread as the combined result of measures taken by central planners (governments) and individual behavior. In this work, the spread of 
Covid-19 is modelled as a differentiable game between the planner and population with appropriate disease spread dynamical equations. 
We first characterise the equilibrium dynamics of only the population with modifed Susceptible-Infected-Recovered (SIR) equations to highlight the qualitative nature of the equilbrium.
Using this result, we formulate the joint equilibrium exposure profile between the planner and population.
Additionally, as in case of Covid-19, the role of asymptomatic carriers, inadequacies in testing, contact tracing and quarantining can lead to a significant underestimate of the true infected numbers as compared to just the detected numbers. Therefore, it is vital to model the true infected numbers within 
the context of choices made by individuals within the population. To incorporate this, we extend our framework by modifying the dynamics 
to include additional sub-compartments of `undetected infected' and `detected infected' in the disease dynamics. The individuals make their own estimates of the total infected 
from the detected numbers and base their strategies on those estimates. We show that these considerations lead to a retarded optimal 
control problem for the players. We present some simulation results based on these results to demonstrate how population behavior, planner control, detection rates and trust in the reported numbers play a key role in how the disease spreads.
\end{abstract}

\section{INTRODUCTION}

Infectious diseases spread because of interactions between the infected and the
susceptible. At an individual level, a simple strategy to reduce the possibility of
transmission is to voluntarily reduce ones interaction with others, i.e., to do social distancing. 
A central planner aims to impose constraints to individual behavior 
so as to maximise the total societal welfare. To model the disease spread effectively it is important 
to combine the choices of both the planner and the individuals in a unified way. The main goal of this work is to formulate a
game theoretic framework in which one can analyse and characterise
the resulting equilibrium between the individuals and the planner. A secondary goal 
is to modify the disease spread dynamics so as to account for the spread of disease
by infected individuals who are not detected and isolated from the susceptibles. This also leads to 
incorporating the individual's estimate of infection in the disease spread model. Using this, we present some simulations to qualitatively demonstrate the impact of planner control, population choices, detection rates and trust in the detected numbers in the spread of the disease.
\subsection{Related Work}
In mathematical epidemiology the spread of diseases is often modelled through
various compartmental models. The simplest of them is the SIR model \cite{c1}.
A considerable literature has been built up to include many extensions and variations to 
this basic model (see for example \cite{c2}, \cite{c3}, \cite{c4} and references therein).
An issue with these models is that they don't capture interventions of government nor individual 
choices. These decisions can have a significant impact on the 
disease spread trajectory.

Prior to the outbreak of Covid-19 some applications of optimal control in field of 
epidemiology include masking rates to prevent swine flu \cite{c5}, treatment rates in dengue transmission \cite{c6},
etc. Since the Covid-19 pandemic began there have been a considerable number of works
which formulate optimal Non-Pharmaceutical Intervention (NPI) as a control problem. \cite{c7} proposed a 
a lockdown that tapers down gradually.  \cite{c8} has a multi-group SIR model in
which the authors look at the optimal control problem for a social planner with control to do
age-specific targeted lockdowns. They show that the optimal solution is to enforce
stringent lockdowns for the older section of the population. \cite{c9} shows that
intermittent lockdowns may be better than moderate measures suggested above for a
class of utility functions particularly in low sero-prevalence scenarios. In a line of work closely related to the current work, \cite{c10} and \cite{c11} combine game theoretic equilibrium analysis based on
utility considerations of the individuals with the SIR model. It is clear from disease trajectories of many countries that it is not sufficient to
study a control problem for a planner or the equilibrium strategies of the population separately.
In all these models, they don't study combined interaction between individual choices and 
governmental policies. We find that the planner can try to take advantage of social distancing tendencies of individuals to control the spread of diseases.
The above models also don't incorporate how perceptions of the extent of disease spread affects the further spread of the disease.
\section{Preliminaries}
\subsection{Game theoretic setup}
We assume that the game is played till a finite horizon time $T$. This $T$ can be interpreted as the idealized vaccine arrival time wherein the vaccine affects 
the entire population instantly and puts an end to the disease.

The planner tries to control the spread of disease by imposing constraints on individual exposure choices. The individual tries to modify their behavior by either reducing or increasing their exposure fraction at any time $t$ while complying with the constraints imposed by the planner.

A strategy $A \in C[0,T]$ \footnote{$C[0,T]$ refers to the space of continuous functions taking values from $[0,T]$ to $\mathbb{R}$.} of the planner is such that $A_t$\footnote{$A_t=A(t)$, i.e, the value of the function $A$ at time $t$. Similarly for $g_t$. }$\in [0,1],$ $\forall t \in [0,T]$. $A_t$ sets the maximum permissible exposure of an individual at time $t$. $A_t$ represents the restrictions imposed by the planner on the individual's 
exposure profiles in the form of lockdowns, closing down schools, restricting public transport and other NPIs. $A_t$ can vary between $[0,1]$  but, as will be argued later, a natural upper bound of $A_t$ to be binding is the so called `population equilibrium' formed by individuals amongst themselves. In a realistic setting it is unacceptable (due to public disapproval of harsh lockdowns) for the planner to keep the threshold extremely low for extended periods of time. To capture this we extend the model to ensure certain average threshold limits are imposed on the planner.

An individual is assumed to respond to the spread of the disease by modifying their exposure to other individuals while complying with the threshold ($A_t$) imposed by the planner.
To keep the analysis simple, each individual is considered indistinguishable from another (one could consider the more realistic setting of several distinct groups) and is assumed to symmetrically employ an exposure strategy (or individual control or exposure profile) $g \in C[0,T]$ with $g_t \in [0,A_t]$, $\forall t \in [0,T]$. $g_t$ represents the reduced exposure from a normal baseline of unity prior to the onset of the disease.
For the purposes of calculating the equilbrium we shall, at times, also consider the strategy $g^{\alpha}$ of a canonical individual $\alpha$ differing from the symmetric strategy $g$ employed by the 
rest of the population.

In the simplest setting we assume that infected individuals are not isolated from the rest of the population and know the total 
number of infected at any given time. Later, we remove these restrictions by modelling the detecting of infection by introducing new infection compartments of `undetected infected' and `detected infected'.
\subsection{Dynamics of Disease Spread}
The evolution of susceptible fraction $S_t \in [0,1]$ and the infected fraction $I_t \in [0,1]$  is given by:
\begin{equation}
\label{eqn:dynbasic}
\begin{aligned}
\frac{d S_t}{dt}&=-\beta g_t^2 S_t I_t\\
\frac{dI_t}{dt}&=\beta g_t^2 S_t I_t-\gamma I_t
\end{aligned}
\end{equation}
 with the initial conditions $S_0=1-\epsilon$, $I_0=\epsilon$. $\epsilon$ is the initial fraction of infection in the population,
$\beta$ is the probability of getting infected per interaction with an infected individual and $\gamma$ is the recovery rate from infection\footnote{$\gamma, \beta, \epsilon, B,C$ and $R$ are positive real constants.}.
If the entire population plays a uniform exposure fraction $g_t$, the effective susceptible and infection fractions are $g_t S_t$ and $g_t I _t$ respectively,
 and the total number of interactions between the susceptibles and infected is $g_t^2S_tI_t$. This explains the quadratic dependence on $g_t$.
 We note that if all individuals played $g_t=1$, $\forall t \in [0,T]$ then we get the standard SIR model.

A canonical individual $\alpha$, will get infected at an unknown random time $\tau _{\alpha}$. We assume that each individual has an
estimate for their own survival probability which they estimate through a hazard rate model
\begin{equation}
\label{eqn:survprob}
\frac{dP(\tau_{\alpha}>t)}{dt}=-\beta g^{\alpha}_t g_t I_t P(\tau_{\alpha}>t)
\end{equation}
Here $g^{\alpha}_t$ is the exposure fraction strategy played by the individual $\alpha$ and $P(\tau_{\alpha}>t)$ is the survival probability\footnote{1-$P(\tau_{\alpha}>t)$ is the probability of getting infected before $t$.} at time $t$.
This estimate of the infection/survival probability and the trade off between benefits and risks of exposure will drive the individual's exposure strategy.
To model the benefits and risks for the players we next define the cost functionals which they each will seek to minimise.
\subsection{Cost Functionals of Players}
For simplicity, we assume $\alpha$ gets a linear rate of benefit $B$ per unit time from interacting with other individuals in the society. We also assume 
that upon contracting the disease, the individual suffers a one time cost $C$.
If $\alpha$ survives till $T$, there is a reward $R$ for surviving. Thus, $\alpha$ minimises the following cost functional $J_{\alpha}$ where ( in what follows $\one_{\{E\}}$ is the characteristic function of a set $E$.)
\begin{equation*}
 J_{\alpha}(g,g^{\alpha},A)=\mathbb{E}\bigg[-\int \displaylimits_{0}^{T \wedge \tau_{\alpha}}B g^{\alpha}_s ds +C\one_{\{\tau_{\alpha} \leq T\}}-R\one_{\{\tau_{\alpha} > T\}}\bigg].
\end{equation*}
The expectation above is with respect to the $\alpha$'s survival probability defined in (2) above. An equivalent formulation of the functional\footnote{This form shows the explicit dependence of $J^{\alpha}$ on $g,g^{\alpha}$ and $A$.} more 
suited for optimal control methods we seek to apply is
\begin{equation}
\label{eqn:indcost}
\begin{aligned}
 J_{\alpha}(g,g^{\alpha},A)=&\int \displaylimits_{0}^{T}P(\tau_{\alpha}>t)g^{\alpha}_t\bigg \{-B +C\beta g_t I_t\bigg\} dt\\
&-R P(\tau_{\alpha}>T).
\end{aligned}
\end{equation}
In the objective functionals above, we impose the restriction $g^{\alpha}_t,g_t \leq A_t$. 

An individual gets a benefit of $-Bg_t\Delta t$ and has an expected cost of infection $C\beta g_t^2I_t$ from an exposure strategy of $g_t$ in the interval $[t,t+\Delta t]$.
Thus, a measure of cost borne by the entire society is then $(-B +C\beta g_t I_t)g_tS_t\Delta t$ during this interval. Similarly, the total societal reward for surviving is $RS_T$. The above discussion motivates the planner's functional to be 
\begin{equation*}
J_{P}(g,A)=\int \displaylimits_{0}^{T}S_t g_t\bigg \{-B +C\beta g_t I_t\bigg\} dt-R S_T
\end{equation*}

\addtolength{\textheight}{-3cm}   

\section{MAIN RESULTS}
We shall first describe the result when there is no planner control ($A_t=1, \forall t$). The result highlights the qualitative nature of
the equilibrium amongst only the individuals in the population. It will also serve as a natural constraint on $A_t$ when we consider the 
more involved case with the planner control. This allows one to take a symmetric view on strategies - choices of the planner constrains the choices of the individual and 
vice versa.
\subsection{Pure Population Equilibrium Without Detection}
We shall assume that there is no control from the planner, i.e.,  $A_t=1$ for all $t \in [0,T]$.
It is only a game between the individuals of the population. An equilibrium result similar to that mentioned in this subsection can be found in \cite{c10}. To derive the equilibrium 
exposure strategy we use the Pontryagin Minimum Principle (PMP) (see section 3.3 in \cite{c12}). Let $\alpha$ play the strategy profile $g^{\alpha}_t$ and let the rest of the population 
play $g_{eq,t}$. Then in equilibrium we have
\begin{equation}
\label{eqn:nasheq}
\begin{aligned}
J_{\alpha}(g_{eq},g^{\alpha}_{eq},1)& \leq J_{\alpha}(g_{eq},g^{\alpha},1)\\
g^{\alpha}_{eq,t}&=g_{eq,t}
\end{aligned}
\end{equation}
The first condition is the definition of an equilibrium (Nash) while the second equation follows from the 
symmetric assumption on the individual's exposure profile. 

\begin{theorem}\label{thm1}
  For the dynamical game without a central planner the equilibrium exposure profile must be of the form:
  \begin{equation}
  \label{eqn:popeq}
      g^{\alpha}_{eq,t}=g_{eq,t}=\min\bigg(\frac{B }{\beta I_t (C-\lambda_{t})},1\bigg)\one_{\{C>\lambda_{t}\}}+\one_{\{C\leq \lambda_{t}\}}.
  \end{equation}
  The corresponding dynamics are governed by the equations:
  \begin{equation*}
      \begin{aligned}
      \frac{dS_t}{dt}&=-\beta g_{eq,t}^2 S_tI_t,\\
      \frac{dI_t}{dt}&=\beta g_{eq,t}^2 S_tI_t-\gamma I_t,\\
      \frac{d \lambda_{t}}{dt}&=g_{eq,t} (B-\beta I_t g_{eq,t}(C-\lambda_t)),
      \end{aligned}
  \end{equation*}
  with boundary conditions: $S_0=1-\epsilon,I_0=\epsilon,\lambda_{T}=-R$.
\end{theorem}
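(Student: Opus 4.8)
The plan is to treat the statement as a direct application of the Pontryagin Minimum Principle to the single-agent problem faced by the canonical individual $\alpha$, followed by imposing the symmetry condition in \eqref{eqn:nasheq}. From $\alpha$'s viewpoint the population profile $g_t$ and the infected fraction $I_t$ are exogenous; the only state variable is the survival probability $p_t := P(\tau_\alpha>t)$ evolving by \eqref{eqn:survprob}, and the only control is $g^\alpha_t \in [0,A_t]=[0,1]$. First I would recast the cost \eqref{eqn:indcost} in Bolza form, with running cost $p_t g^\alpha_t(-B+C\beta g_t I_t)$ and terminal cost $-R\,p_T$, and form the Hamiltonian $H=p_t g^\alpha_t(-B+C\beta g_t I_t)-\lambda_t\,\beta g^\alpha_t g_t I_t p_t$ with costate $\lambda_t$ conjugate to $p_t$. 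Collecting the terms proportional to the control gives
\begin{equation*}
H = p_t\, g^\alpha_t\big[-B+\beta g_t I_t(C-\lambda_t)\big],
\end{equation*}
which is \emph{linear} in $g^\alpha_t$.

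Next I would extract the three conclusions from this Hamiltonian. The costate equation $\dot\lambda_t=-\partial H/\partial p_t = g^\alpha_t\big[B-\beta g_t I_t(C-\lambda_t)\big]$ reduces, after imposing $g^\alpha_t=g_t=g_{eq,t}$, to the stated ODE for $\lambda_t$; the transversality condition for the free terminal state with terminal cost $-R\,p_T$ gives $\lambda_T=-R$; and the equations for $S_t,I_t$ are simply the disease dynamics \eqref{eqn:dynbasic} evaluated at $g_t=g_{eq,t}$. The substantive step is the pointwise minimization of $H$ over the control. Since $p_t>0$ along any admissible trajectory (it is the exponential of a nonpositive integral, starting from $p_0=1$), the sign of the bracket $\sigma_t:=-B+\beta g_t I_t(C-\lambda_t)$ alone decides the minimizer: $g^\alpha_t=1$ when $\sigma_t<0$, $g^\alpha_t=0$ when $\sigma_t>0$, and every value is optimal when $\sigma_t=0$.

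The main obstacle — and the crux of the argument — is reconciling this bang-bang structure with an \emph{interior} equilibrium. The bracket $\sigma_t$ depends on the population profile $g_t$ rather than on $\alpha$'s own control, so whenever $\sigma_t\neq 0$ the individual has a strict incentive to deviate to $0$ or $1$, and no interior symmetric profile can be an equilibrium. An interior $g_{eq,t}\in(0,1)$ is therefore sustainable only on the singular arc $\sigma_t=0$, where $\alpha$ is indifferent and playing $g^\alpha_{eq,t}=g_{eq,t}$ is a best response, realizing \eqref{eqn:nasheq}. Setting $\sigma_t=0$ with $g_t=g_{eq,t}$ and solving $\beta g_{eq,t} I_t(C-\lambda_t)=B$ yields $g_{eq,t}=B/\big(\beta I_t(C-\lambda_t)\big)$, which is positive precisely when $C>\lambda_t$. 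Finally I would patch in the boundary regimes: when this expression exceeds $1$ the constraint $g^\alpha_t\le 1$ binds and one checks $\sigma_t<0$ at $g_{eq,t}=1$, forcing $g_{eq,t}=1$; and when $C\le\lambda_t$ one has $\sigma_t\le -B<0$ for every admissible control, again forcing $g_{eq,t}=1$. Assembling the three cases gives exactly the form $\min\!\big(\tfrac{B}{\beta I_t(C-\lambda_t)},1\big)\one_{\{C>\lambda_t\}}+\one_{\{C\le\lambda_t\}}$ of \eqref{eqn:popeq}, completing the characterization.
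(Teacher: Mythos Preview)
Your proposal is correct and follows essentially the same route as the paper: apply the Pontryagin Minimum Principle to the individual's problem with the population profile held fixed, then impose the symmetry condition \eqref{eqn:nasheq}. The only differences are cosmetic --- the paper carries $S_t,I_t$ as additional state variables (with extra costates $\mu_t,\nu_t$ that never enter the control characterization, since the $S_t,I_t$ dynamics do not depend on $g^\alpha_t$), and it jumps directly from the linear-in-$g^\alpha_t$ Hamiltonian to the formula \eqref{eqn:popeq} without making the singular-arc reasoning explicit as you do.
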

\begin{proof}
Let  $P_t:=P(\tau_{\alpha}>t).$ The Hamiltonian (dynamical system is (\ref{eqn:dynbasic})-(\ref{eqn:survprob}) and (\ref{eqn:indcost}) is the cost functional) for the individual $\alpha$'s minimisation problem when all the other individuals 
play the profile $g_{eq,t}$ is
\begin{equation*}
\begin{aligned}
&H(g^{\alpha}_tS_t,I_t,P_t,\lambda_t,\mu_t,\nu_t):= g^{\alpha}_t(-B +C\beta g_{eq,t} I_t)P_t+\\
&\lambda_{t}(-\beta g^{\alpha}_t g_{eq,t} I_t P_t)+\mu_{t}(-\beta g_{eq,t}^2 S_tI_t)+\nu_{t}(\beta g_{eq,t}^2 S_t I_t-\gamma I_t).
\end{aligned}
\end{equation*}
$\lambda_t, \mu_t ,\nu_t$ are the adjoint functions. We know that the optimal control $g^{\alpha}_t$ minimises the Hamiltonian and combining this along with (\ref{eqn:nasheq}) gives 
  \begin{equation*}
   \begin{aligned}
 & g^{\alpha}_{eq,t}=\underset{g^{\alpha}_t \in[0,1]}{\arg \min} \hspace{0.15cm} (-B+C \delta g_{eq,t} I_t-\lambda_{t}\beta  g_{eq,t} I_t )P_t g^{\alpha}_t , \\
\implies& g^{\alpha}_{eq,t}=\min\bigg(\frac{B}{\beta I_t (C-\lambda_{t})},1\bigg)\one_{\{C>\lambda_{t}\}}+\one_{\{C\leq \lambda_{t}\}}.
   \end{aligned}
  \end{equation*}
The only adjoint variable in the expression above is $\lambda_t$ and from PMP we have 
its dynamical equation to be
\begin{equation*}
\begin{aligned}
\frac{d \lambda_t}{dt}=&-\frac{\partial H}{\partial P},\\
=&g_{eq,t} (B-\beta I_t g_{eq,t}(C-\lambda_t)).
\end{aligned}
\end{equation*}
with the additional boundary condition for this equation supplied by the transversality condition of PMP, i.e., $\lambda_T=-R$.
\end{proof}
From standard results in optimal control (see \cite{c15}) we can associate the adjoint variable with partial derivative of the value function wrt state variables. Here, 
$\lambda_t$ is the partial derivative of the value function wrt $P_t$. This leads us to conclude (see Appendix A for proof) that
\begin{equation}
\label{adjinterp}
\lambda_t=\mathbb{E}\bigg[\int \displaylimits_{t}^{T \wedge \tau_{\alpha}}-B g^{\alpha}_s ds +C\one_{\{\tau_{\alpha} \leq T\}}-R\one_{\{\tau_{\alpha} > T\}}\bigg|\tau_{\alpha}>t\bigg]
\end{equation}
Thus $\lambda_t$ can be interpreted as the future expected cost given the individual has survived till time t. With this interpretation the 
equilibrium profile in Theorem \ref{thm1} implies that whenever either the total number of infections  or the future expected costs are high then 
the population starts social distancing until the number of infections decrease.

\begin{lemma}\label{lem1}
$\lambda_t$ is a non-decreasing function of $t$. More precisely whenever $g_{eq,t}<1$ then $\lambda_t$ is a constant and 
is strictly increasing whenever $g_{eq,t}=1$.
\end{lemma}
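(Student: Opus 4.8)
The plan is to substitute the equilibrium control $g_{eq,t}$ from Theorem~\ref{thm1} directly into its adjoint equation $\dot\lambda_t = g_{eq,t}\big(B - \beta I_t g_{eq,t}(C-\lambda_t)\big)$ and examine the sign of the right-hand side region by region, as determined by the $\min$-and-indicator structure of $g_{eq,t}$. Before doing this I would establish the standing fact that $I_t>0$ for all $t\in[0,T]$: since $I_0=\epsilon>0$ and $\dot I_t=(\beta g_{eq,t}^2 S_t-\gamma)I_t$ is linear and homogeneous in $I_t$, the integrating factor gives $I_t=\epsilon\exp\!\left(\int_0^t(\beta g_{eq,s}^2 S_s-\gamma)\,ds\right)>0$. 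This keeps every expression below well defined and fixes the sign of $\beta I_t$.

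The heart of the argument is a case split into the interior and the boundary regimes of the control. When $g_{eq,t}<1$, the equilibrium formula forces $C>\lambda_t$ and $g_{eq,t}=B/\big(\beta I_t(C-\lambda_t)\big)$. The key observation is that substituting this value makes the product $\beta I_t g_{eq,t}(C-\lambda_t)$ collapse exactly to $B$, so the bracket $B-\beta I_t g_{eq,t}(C-\lambda_t)$ vanishes identically and $\dot\lambda_t=0$; hence $\lambda_t$ is locally constant on every interval where the constraint is not binding. When $g_{eq,t}=1$, the adjoint equation reduces to $\dot\lambda_t=B-\beta I_t(C-\lambda_t)$. Here $g_{eq,t}=1$ occurs either because $C\le\lambda_t$, in which case $C-\lambda_t\le0$ and the subtracted term is $\le 0$, or because $C>\lambda_t$ together with $B\ge\beta I_t(C-\lambda_t)$; in both subcases $\dot\lambda_t\ge0$, and strictly positive whenever the upper bound is strictly active, using $B>0$ and $I_t>0$.

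Combining the two regimes yields $\dot\lambda_t\ge0$ for all $t$, so $\lambda_t$ is non-decreasing, together with the sharp dichotomy asserted in the statement. I expect the only delicate point to be the bookkeeping at the switching boundary $B=\beta I_t(C-\lambda_t)$, where the interior and boundary regimes meet and $\dot\lambda_t=0$ is attained from both formulas. I would note that on this transition set the two expressions for $g_{eq,t}$ agree, so the clause ``strictly increasing whenever $g_{eq,t}=1$'' is to be read as holding on the open region where the upper constraint is strictly binding; the non-decreasing conclusion is in any case unaffected.
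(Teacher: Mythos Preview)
Your proof is correct and follows exactly the route the paper intends: the paper's own proof simply reads ``This is easily verified by looking at the expressions for $g_{eq,t}$ and $\frac{d\lambda_t}{dt}$ from Theorem~\ref{thm1}.'' You have carried out that verification in full, including the positivity of $I_t$ and the careful handling of the switching boundary $B=\beta I_t(C-\lambda_t)$, which the paper leaves implicit.
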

\begin{proof}
This is easily verified by looking at the expressions for $g_{eq,t}$ and $\frac{d \lambda_t}{dt}$ from Theorem \ref{thm1}.
\end{proof}
This monotonicity intuitively is because for any arbitrarily small interval the equilibrium strategy must accrue more benefit than a strategy of complete social distancing, i.e, $g_t=0$ within the same time interval.

Lemma \ref{lem1} indicates that in equilibrium the dynamics is a hybrid one with the value of $g_{eq,t}$ triggering the switch 
between the states of social distancing and normal behavior. When $g_{eq,t}=1$ we have normal behavior with no social distancing and the dynamics of $S_t,I_t$ is just the same as the SIR model.
We characterise the dynamics in social distancing regime $(g_{eq,t}<1)$  with the following lemma:
\begin{lemma}\label{lem2}
In the social distancing regime we have the following relation between $S_t$ and $I_t$
\begin{equation*}
    I_t=\sqrt{\frac{S_t}{C_0}}\bigg(\frac{K_1(2 \sqrt{C_0 S_t})-C_1L_1(2 \sqrt{C_0 S_t})}{K_0(2 \sqrt{C_0 S_t})+C_1L_0(2 \sqrt{C_0 S_t})}\bigg)
\end{equation*}
where $L_n(x)$ and $K_n(x)$ denote the $n^{th}$ order modified Bessel functions of the first and the second kind, respectively. $C_0=\frac{\gamma \beta (C-\lambda_0)^2}{B^2}$. $C_1$ is determined by the initial values $S_0,I_0,\lambda_0$ at the onset of social distancing.
\end{lemma}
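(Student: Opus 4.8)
The plan is to eliminate time and derive a single first-order ODE relating $I_t$ and $S_t$, then recognize it as a Riccati equation whose linearization is a modified Bessel equation. First I would exploit the defining feature of the social distancing regime: since $g_{eq,t}<1$, Theorem~\ref{thm1} gives $g_{eq,t}=\frac{B}{\beta I_t(C-\lambda_t)}$, and by Lemma~\ref{lem1} the adjoint $\lambda_t$ is frozen at its value $\lambda_0$ at the onset of this regime. Substituting $g_{eq,t}^2=\frac{B^2}{\beta^2 I_t^2(C-\lambda_0)^2}$ into the state dynamics collapses the infection term $\beta g_{eq,t}^2 S_t I_t$ to $\frac{B^2 S_t}{\beta I_t(C-\lambda_0)^2}=\frac{\gamma S_t}{C_0 I_t}$ with $C_0=\frac{\gamma\beta(C-\lambda_0)^2}{B^2}$. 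Dividing $\frac{dI_t}{dt}$ by $\frac{dS_t}{dt}$ then cancels the common factor and all explicit time dependence, leaving the autonomous relation
\begin{equation*}
\frac{dI}{dS}=\frac{C_0}{S}I^2-1.
\end{equation*}

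This is a Riccati equation in $I(S)$, so the standard move is to linearize it. Setting $I=-\frac{S}{C_0}\frac{u'}{u}$ (the usual Riccati substitution for the quadratic coefficient $P(S)=C_0/S$) converts it into the second-order linear ODE
\begin{equation*}
S\,u''+u'-C_0\,u=0.
\end{equation*}
I would then change the independent variable to $x=2\sqrt{C_0 S}$, which is precisely the argument appearing in the claimed formula; computing $\frac{du}{dS}$ and $\frac{d^2u}{dS^2}$ in terms of $x$ reduces the equation to $x^2 v''+x v'-x^2 v=0$, the modified Bessel equation of order zero. Its general solution is $v(x)=C_1 L_0(x)+K_0(x)$, where one overall constant has been scaled out and the remaining ratio has been named $C_1$.

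To finish, I would invert the Riccati substitution. Using $L_0'(x)=L_1(x)$, $K_0'(x)=-K_1(x)$ and $\frac{dx}{dS}=\sqrt{C_0/S}$, the combination $-\frac{S}{C_0}\frac{u'}{u}$ simplifies (the prefactor $\frac{S}{C_0}\sqrt{C_0/S}$ reducing to $\sqrt{S/C_0}$) to
\begin{equation*}
I=\sqrt{\frac{S}{C_0}}\,\frac{K_1(2\sqrt{C_0 S})-C_1 L_1(2\sqrt{C_0 S})}{K_0(2\sqrt{C_0 S})+C_1 L_0(2\sqrt{C_0 S})},
\end{equation*}
which is exactly the asserted identity. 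The single remaining integration constant $C_1$ is pinned down by evaluating this relation at the onset of social distancing through $S_0,I_0,\lambda_0$.

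The main obstacle I anticipate is bookkeeping rather than conceptual: ensuring that the Riccati-to-linear substitution together with the change of variables $x=2\sqrt{C_0 S}$ lands precisely on the order-\emph{zero} Bessel equation (rather than an order-one or index-shifted variant), and then matching the Bessel derivative identities so that $L_1,K_1$ appear in the numerator and $L_0,K_0$ in the denominator with the correct signs while the prefactor $\sqrt{S/C_0}$ emerges cleanly. Verifying that the definition of $C_0$ makes the constants collapse exactly is the one place where a stray factor of $\gamma$ or $\beta$ could easily be misplaced.
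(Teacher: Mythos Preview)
Your proposal is correct and follows exactly the paper's approach: derive $\frac{dI}{dS}=-1+\frac{C_0 I^2}{S}$ from the constancy of $\lambda_t$ in the social-distancing regime, recognize it as a Riccati equation, and linearize to the modified Bessel equation. The paper merely cites the standard Riccati-to-linear reduction and stops, whereas you have (correctly) carried out the substitution $I=-\frac{S}{C_0}\frac{u'}{u}$, the change of variable $x=2\sqrt{C_0 S}$, and the Bessel derivative identities explicitly.
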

\begin{proof}
In the social distancing regime from Theorem \ref{thm1} we have $g_{eq,t}=\frac{B }{\beta I_t (C-\lambda_{t})}$ and that $\lambda_t$ is 
constant. Thus we have:
\begin{equation*}
  \begin{aligned}
  \frac{dI}{dS}&=-1+\frac{\gamma}{\beta g_{eq,t}^2 S}\\
  &=-1+\frac{C_0 I^2}{S} \hspace{1cm}
  \end{aligned}
\end{equation*}
This is a Riccatti equation and the standard reduction of a Riccatti equation to a linear second order ODE (see \cite{c16}) gives the result.
\end{proof}
We shall use the results in this subsection to characterise the more complicated equilibrium that exists between the planner and population
of individuals.
\subsection{Equilibrium with Population and Planner}\label{B}
In this case the planner is trying to optimally set a threshold to minmise its own cost functional. The constraint on the planner is that if it sets 
a high enough threshold then individuals behavior may follow the pure population equilibrium from the previous subsection (and hence the threshold $A_t$ becomes non-binding). We have the constraints that 
$0 \leq g_t \leq A_t$ and $0 \leq A_t \leq g^{pop}_{eq,t}$ \footnote{$g^{pop}_{eq,t}$ refers to the pure population equilibrium described in the (\ref{eqn:popeq}).}.
In this case at equilibrium we must have 
\begin{equation}
\label{plannerneq}
\begin{aligned}
& J_{\alpha}(g_{eq},g^{\alpha}_{eq,t},A_{eq,t}) \leq J_{\alpha}(g_{eq},g^{\alpha},A_{eq,t}),\\
&J_{P}(g_{eq},A_{eq}) \leq J_{P}(g_{eq},A),\\
&g^{\alpha}_{eq,t}=g_{eq,t}. 
\end{aligned}
\end{equation}
 As the set of admissible controls for the players vary with both time
and state we use a generalised version of PMP (Theorem 3.1 in \cite{c13}). We have the following result characterising the 
equilibrium between population and planner:
 \begin{theorem}\label{thm2}
 The population-central planner game has the following equilibrium profile:
\begin{equation}
\label{planpopeq}
  \begin{aligned}
  g_{eq,t}=& \min\bigg(\frac{B}{\beta I_t (C -\lambda_{t})},A_{eq,t}\bigg)\one_{\{C>\lambda_{t}\}}+A_{eq,t}\one_{\{C\leq \lambda_{t}\}}\\
 A_{eq,t}=& \min \bigg(\frac{B}{2 \beta I_t (	C-\lambda_{1,t}+\lambda_{2,t})},g^{pop}_{eq,t}\bigg)\one_{\{C +\lambda_{2,t}>\lambda_{1,t}\}}\\
&+g^{pop}_{eq,t}\one_{\{C +\lambda_{2,t} \leq \lambda_{1,t}\}} 
  \end{aligned}
 \end{equation}
 with $g^{pop}_{eq,t}$ as defined in (\ref{eqn:popeq}) of Theorem \ref{thm1}. The corresponding dynamics is given by:
 \begin{equation*}
 \begin{aligned}
     \frac{d S_t}{dt}&=-\beta g_{eq,t}^2 S_t I_t,\\
     \frac{d I_t}{dt}&=\beta g_{eq,t}^2S_t I_t-\gamma I_t,\\
     \frac{d\lambda_t}{dt}&=g_{eq,t}(B -\beta g_{eq,t} I_t(C-\lambda_t )),\\
      \frac{d \lambda_{1,t}}{dt}&=g_{eq,t}(B -\beta g_{eq,t} I_t(C +\lambda_{2,t}- \lambda_{1,t})),\\
      \frac{d \lambda_{2,t}}{dt}&=-\beta g_{eq,t}^2 S_t (C  -\lambda_{1,t}+\lambda_{2,t})+\gamma \lambda_{2,t}.    
  \end{aligned}
 \end{equation*}
 with boundary conditions:  $S_0=1-\epsilon,I_0=\epsilon,\lambda_T=-R,\lambda_{1,T}=0,\lambda_{2,T}=0.$
\end{theorem}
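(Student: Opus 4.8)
The plan is to characterise the equilibrium (\ref{planpopeq}) by applying the generalised Pontryagin Minimum Principle of \cite{c13} to each player's subproblem and then imposing the symmetry and mutual best-response conditions (\ref{plannerneq}) simultaneously. First I would treat the individual's subproblem exactly as in the proof of Theorem \ref{thm1}, but with the upper bound on the admissible control changed from $1$ to the planner's threshold $A_{eq,t}$. Writing the individual Hamiltonian with adjoints $\lambda_t,\mu_t,\nu_t$ for $P_t,S_t,I_t$ and minimising over $g^{\alpha}_t\in[0,A_{eq,t}]$, the coefficient of $g^{\alpha}_t$ is $P_t\big(-B+\beta g_{eq,t}I_t(C-\lambda_t)\big)$; since $P_t>0$ this yields a bang--singular control that, after imposing $g^{\alpha}_{eq,t}=g_{eq,t}$, reproduces the first line of (\ref{planpopeq}) together with the adjoint equation $\dot\lambda_t=g_{eq,t}\big(B-\beta g_{eq,t}I_t(C-\lambda_t)\big)$ and $\lambda_T=-R$, identical to Theorem \ref{thm1} with $A_{eq,t}$ in place of $1$.

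The second and main step is the planner's subproblem. The planner's condition in (\ref{plannerneq}) is non-degenerate only once the population best response $g_{eq}=g_{eq}(A)$ is substituted, so the problem is effectively a Stackelberg/bilevel one handled by coupling the two optimality systems. Here I would exploit that the admissible set forces $A_{eq,t}\le g^{pop}_{eq,t}\le \tfrac{B}{\beta I_t(C-\lambda_t)}$, so that the min in the individual's response collapses and $g_{eq,t}=A_{eq,t}$: the planner effectively sets the population exposure directly, subject to the state-dependent cap $A_t\le g^{pop}_{eq,t}$ of (\ref{eqn:popeq}). Appending the follower costate $\lambda_t$ to the planner state $(S_t,I_t)$ and forming the planner Hamiltonian $\mathcal{H}_P$ with adjoints $\lambda_{1,t},\lambda_{2,t}$ for $S_t,I_t$, substitution of $g_{eq,t}=A_t$ makes $\mathcal{H}_P$ quadratic in $A_t$, of the form $\beta I_tS_t(C-\lambda_{1,t}+\lambda_{2,t})A_t^2-BS_tA_t+\text{const}$. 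Minimising over $A_t\in[0,g^{pop}_{eq,t}]$, the common factor $S_t$ cancels and gives the interior stationary point $\tfrac{B}{2\beta I_t(C-\lambda_{1,t}+\lambda_{2,t})}$, which is a minimiser precisely when the quadratic coefficient is positive, i.e. on $\{C+\lambda_{2,t}>\lambda_{1,t}\}$; otherwise the minimum sits at the upper cap $g^{pop}_{eq,t}$. Capping the interior value by $g^{pop}_{eq,t}$ reproduces the second line of (\ref{planpopeq}).

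It remains to produce the costate dynamics and boundary data. Differentiating $\mathcal{H}_P$ gives $\dot\lambda_{1,t}=-\partial\mathcal{H}_P/\partial S_t$ and $\dot\lambda_{2,t}=-\partial\mathcal{H}_P/\partial I_t$, which (using that $g_{eq,t}=A_t$ does not depend on $S_t$) yield exactly the stated equations for $\lambda_{1,t}$ and $\lambda_{2,t}$; the transversality conditions then supply their terminal values, with $\lambda_{2,T}=0$ because $I_T$ does not enter the terminal reward, while $\lambda_T=-R$ is inherited from the individual's subproblem. The one point requiring genuine care is the role of the appended follower costate $\lambda_t$: its planner adjoint obeys the homogeneous linear equation $\dot q=-\beta A_t^2 I_t\,q$ with a vanishing transversality value, forcing $q\equiv0$. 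This is exactly why $A_{eq,t}$ depends on $\lambda_{1,t},\lambda_{2,t}$ but not on $\lambda_t$, and why $\lambda_t$ enters the coupled system only through the constraint bound $g^{pop}_{eq,t}$. I expect the main obstacle to be the rigorous justification of the generalised PMP under the two state-dependent control constraints $g_t\le A_t$ and $A_t\le g^{pop}_{eq,t}$ and the Stackelberg coupling between the two optimality systems; establishing $q\equiv0$ (equivalently, the cancellation of $S_t$ in the planner's first-order condition) is the key simplification that decouples the leader's threshold from the follower's shadow price and delivers the clean form (\ref{planpopeq}).
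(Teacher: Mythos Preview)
Your proposal is correct and follows essentially the same route as the paper: solve the individual's subproblem under the cap $g^{\alpha}_t\le A_{eq,t}$, then treat $(S_t,I_t,\lambda_t)$ as the planner's state, apply the generalised PMP of \cite{c13} with adjoints $(\lambda_{1,t},\lambda_{2,t},\lambda_{3,t})$ and the constraint $A_t\le g^{pop}_{eq,t}$, eliminate $\lambda_{3,t}$, and minimise the resulting quadratic in $A_t$. Your justification that the adjoint $q$ of $\lambda_t$ vanishes because it satisfies a homogeneous linear ODE with zero terminal datum, together with your explicit cancellation of the common factor $S_t$ in the first-order condition, is in fact cleaner than the paper's own presentation of that step.
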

\begin{proof}
The individual is trying to minimise $J_{\alpha}$ given the strategies $g_{eq,t}$ and $A_{eq,t}$. Assuming $A_{eq,t}$ to be a given function of time we can parametrize the admissible set of controls as $Q_1 \leq 0$ where:
\begin{equation*}
    Q_1(t,S_t,I_t,g^{\alpha}_t)=g^{\alpha}_t-A_{eq,t}.
\end{equation*}
This has a non-zero derivative wrt the control and hence we can apply Theorem 3.1 from \cite{c13} for the individual's control problem. The difference from Theorem \ref{thm1} is that the controls are restricted dynamically. The Hamiltonian has to be minimised only within this dynamically changing feasible control set. The Hamiltonian in this case is
\begin{equation*}
    \begin{aligned}
     H(S_t,I_t,&P_t,g^{\alpha}_t,\lambda_t,\lambda_2,\lambda_3,\mu_t):=P_t g^{\alpha}_t(-B +C\beta g_{eq,t} I_t)\\
     &+\lambda_{t}(-\beta g^{\alpha}_t g_{eq,t} I_t P_t)+\kappa_{t}(-\beta g_{eq,t}^2 S_tI_t)\\
     &+\iota_{t}(\beta g_{eq,t}^2 S_tI_t-\gamma I_t)+\mu_t(g^{\alpha}_t-A_{eq,t}).
    \end{aligned}
\end{equation*}
$\lambda_t,\kappa_t,\iota_t$ and $\mu_t$ are adjoint variables. Minimising the Hamiltonian with (\ref{plannerneq}) gives
  \begin{equation*}
   \begin{aligned}
 g^{\alpha}_{eq,t}=\underset{g^{\alpha}_t \in[0,A_{eq,t}]}{\arg \min}& \hspace{0.25cm}P_t g^{\alpha}_t (-B+C \beta g_{eq,t} I_t-\lambda_{t}\beta f_t g_{eq,t} I_t) \\
 g^{\alpha}_{eq,t}=g_{eq,t}&=\min\bigg(\frac{B}{\beta I_t (C -\lambda_{t})},A_{eq,t}\bigg)\one_{\{C>\lambda_{t}\}}\\
&+A_{eq,t}\one_{\{C\leq \lambda_{t}\}}.
   \end{aligned}
  \end{equation*}
In the first step above $\mu_t$ doesn't appear because of complementarity condition $\mu_t(g^{\alpha}_t-A_{eq,t})=0$. The exposure profile in this case is the population equilibrium with upper threshold now set to $A_{eq,t}$ rather than 1.
 
In case of the  planner, for the threshold to be binding, it must be set lesser than the population equilibrium profile (see (\ref{eqn:popeq})). 
\begin{equation*}
g^{pop}_{eq,t}=\min\bigg(\frac{B}{\beta I_t (C -\lambda_{t})},1\bigg)\one_{\{C>\lambda_{t}\}}+\one_{\{C\leq \lambda_{t}\}}
\end{equation*}
Hence, for the planner, the state variables are $S_t,I_t,\lambda_t$. The  planner minimises $J_P$ given the population strategy $g_{eq,t}$.
The dynamical equations relevant to the planner are:
  \begin{equation*}
      \begin{aligned}
      \frac{dS_t}{dt}&=-\beta A_t^2 I_tS_t,\\
      \frac{dI_t}{dt}&=\beta A_t^2 I_tS_t-\gamma I_t,\\
      \frac{d \lambda_{t}}{dt}&=A_t (B-\beta I_t A_t(C-\lambda_t)).
      \end{aligned}
  \end{equation*}
The set of admissible controls for the planner can be summarised by  $Q_2 \leq 0$ where
\begin{equation*}
Q_2(t,S_t,I_t,\lambda_t,A_t)=A_t-g^{pop}_{eq,t}.
\end{equation*}
Thus we can again invoke Theorem 3.1 from \cite{c13} for the planners control problem, The Hamiltonian is given by
\begin{equation*}
    \begin{aligned}
     H(S_t,I_t,\lambda_t&,A_t,\lambda_1,\lambda_2,\lambda_3,\mu_t):=S_t A_t(-B +C\beta A_t I_t)\\
     &+\lambda_{1,t}(-\beta A_t^2 I_t S_t)+\lambda_{2,t}(\beta A_t^2 S_tI_t-\gamma I_t)\\
     &+\lambda_{3,t}(B A_t-\beta A_t^2I_t(C-\lambda_t))+\mu_t(A_t-g^{pop}_{eq,t}).
    \end{aligned}
\end{equation*}
Minimising the Hamiltonian along with (\ref{plannerneq}) gives
  \begin{equation*}
   \begin{aligned}
  A_{eq,t}=\underset{A_t \in[0,g^{pop}_{eq,t}]}{\arg \min} & B A_t (\lambda_{3,t}-S_t)+\beta I_t S_t A_t^2(C-\lambda_{1,t} \\
&+\lambda_{2,t}-\lambda_{3,t}(C -\lambda_t))
   \end{aligned}
  \end{equation*}
The adjoint variables for the planner are denoted by $\lambda_{1,t},\lambda_{2,t},\lambda_{3,t}$.
The adjoint equation of $\lambda_{3,t}$ becomes:
 \begin{equation*}
     \begin{aligned}
      \Dot{\lambda_{3,t}}=-\beta A_t^2S_t I_t
     \end{aligned}
 \end{equation*}
 with the boundary condition $\lambda_{3,T}=0$. But as $S_t, I_t$ are positive, the only way this boundary condition can be satisfied is when $\lambda_{3,t}=0,\forall t$. Using this in the minimum principle we get:
 \begin{equation*}
     \begin{aligned}
      A_{eq,t}&=\underset{A_t \in[0,g^{pop}_{eq,t}]}{\arg \min}-B A_t+\beta I_t S_t A_t^2(C -\lambda_{1,t}+\lambda_{2,t})\\
      A_{eq,t}&=\min \bigg(\frac{B }{2 \beta I_t (C-\lambda_{1,t}+\lambda_{2,t})},g^{pop}_{eq,t}\bigg)\one_{\{C +\lambda_{2,t}>\lambda_{1,t}\}}\\
	&+g^{pop}_{eq,t}\one_{\{C +\lambda_{2,t}\leq \lambda_{1,t}\}}. 
     \end{aligned}
 \end{equation*}
 The adjoint equations become:
 \begin{equation*}
     \begin{aligned}
      \Dot{\lambda_{1,t}}&=B A_{eq,t} -\beta A_{eq,t}^2 I_t (C -\lambda_{1,t}+\lambda_{2,t}),\\
      \Dot{\lambda_{2,t}}&=-\beta A_{eq,t}^2 S_t (C -\lambda_{1,t}+\lambda_{2,t})+\gamma \lambda_{2,t}.
     \end{aligned}
 \end{equation*}
\end{proof}
It can be easily seen that $A_{eq,t}=g_{eq,t}$ and hence the planner's threshold is always binding. The exposure profile of the population is also seen to be the net result of the strategic choices of the planner and the 
population.\\
Additionally, we impose an average threshold constraint on the planner ie. $\int_{0}^{T}A_t dt >C_1$. This is to prevent the planner from accessing strategies which entail harsh thresholds over an extended period. These types of constraints are called "isoperimetric  constraints" and are handled in a standard way in optimal control literature (see \cite{c14}).
\subsection{Detection of Infection}
As mentioned earlier it is important to model the group of undetected infectious indivduals who spread the disease. This framework also allows us to model the estimates of infection spread made by individuals and the planner.
In this section we partition the infected group $I_t$ into two subgroups - $I_{u,t}$, the undetected group of infected and $I_{d,t}$, the detected group of infected. We have:
\begin{equation}
I_t=I_{u,t}+I_{d,t}.
\end{equation}
 We assume that once the infected are detected they are effectively quarantined and no longer infect the susceptibles. Hence, an infected individual either recovers without being detected or gets quarantined after detection. An infected individual is modelled to remain infectious for a period of $\frac{1}{\gamma}$ and has a probability of being detected in this period. For an individual $\alpha$, conditioned on the event $\tau_{\alpha}=t$, we assume a probability density of detection over the period $(t,t+ \frac{1}{\gamma}]$. $\tau_{d}$ denotes  the random time of detection once $\alpha$ is infected. Thus $\tau_d \in [0,\frac{1}{\gamma}]$. For simplicity, we shall assume that the probability of detection is uniform over $[0,\frac{1}{\gamma}]$.Thus, the individual's objective functional becomes:
\begin{equation*}
\mathbb{E}[\int \displaylimits_{0}^{(\tau_{\alpha}+\tau_d )\wedge T}-B g^{\alpha}_t dt+C \one_{\{\tau_{\alpha} \leq T\}}-R \one_{\{\tau_{\alpha} >T\}}]
\end{equation*}
We can re-write the first term as:
\begin{equation*}
-\int_{0}^{T}B g^{\alpha}_t  P\bigg((\tau_{\alpha}+\tau_d)\wedge T>t\bigg)dt.
\end{equation*}
 We have $\forall t \in [0,T]$:
 \begin{equation*}
 P\bigg((\tau_{\alpha}+\tau_d)\wedge T>t\bigg)=1-P\bigg(\tau_{\alpha}+\tau_d \leq t\bigg),
 \end{equation*}
with
\begin{equation}
  \begin{aligned}
    P\bigg(\tau_{\alpha}+\tau_d \leq t\bigg)&=P\bigg(\tau_{\alpha} \leq t-\frac{1}{\gamma}\bigg)\\
&+P\bigg(\tau_d \leq t-\tau_{\alpha} \leq  \frac{1}{\gamma}\bigg).
    \end{aligned}
\end{equation}
 Assuming that $\tau_{\alpha}$ has a density $f$ and the uniform conditional density for $\tau_d$ is $\frac{\gamma}{\eta}$, we rewrite the second term in RHS of (9) as:
 \begin{equation*}
    \begin{aligned}
 P\bigg(\tau_d \leq t-&\tau_{\alpha} \leq \frac{1}{\gamma}\bigg)=\int_{t-\frac{1}{\gamma}}^{t} \bigg(\int_{t-\frac{1}{\gamma}}^{r}f(s)ds\bigg) \frac{\gamma}{\eta}dr\\
 &=P\bigg(\tau_{\alpha}>t-\frac{1}{\gamma}\bigg) \frac{1}{\eta}-\frac{\gamma}{\eta}\int_{t-\frac{1}{\gamma}}^{t}P(\tau_{\alpha}>r)dr
    \end{aligned}
 \end{equation*}
 Here $\frac{1}{\eta}$ (with $\eta>1$) captures the probability of detection and is a parameter in the model. Setting $M_t:=\gamma \int_{t-\frac{1}{\gamma }}^{t}P(\tau_{\alpha}>r) dr,$  we rewrite the individual's objective functional as (superscript $d$ stands for detected):
 \begin{equation}
\begin{aligned}
  J^d_{\alpha}(g,g^{\alpha},A)&=-B\int_{0}^{T}g^{\alpha}_t \bigg\{P\bigg(\tau_{\alpha}>t-\frac{1}{\gamma}\bigg)\bigg(1-\frac{1}{\eta}\bigg)\\
&+\frac{M_t}{\eta} \bigg\}dt+C P(\tau_{\alpha}\leq T) -RP(\tau_{\alpha}>T). 
\end{aligned}
 \end{equation}
 The individual $\alpha$ has knowledge only of $I_{d,t}$ and makes an estimate of $I_{u,t}$ from $I_{d,t}$. For simplicity, we assume that the estimate has the form:
 \begin{equation*}
 \widehat{I_{u,t}}=\kappa I_{d,t}.
 \end{equation*}
$\kappa$ encapsulates the trust the population has on the reported detected numbers.  Now as $P(\tau_{\alpha}>t) $ is  linked to the individuals perception of infection, we must modify (2) to:
 \begin{equation*}
   \begin{aligned}
     \frac{dP(\tau_{\alpha}>t)}{dt}&=-\beta g^{\alpha}_t g_t \widehat{I_{u,t}} P(\tau_{\alpha}>t),\\
     &=-\beta g^{\alpha}_t g_t \kappa I_{d,t}P(\tau_{\alpha}>t).
     \end{aligned}
 \end{equation*}
 The state equations for the individual are:
\begin{equation}
\label{detdynamics}
\begin{aligned}
    \frac{dS_t}{dt}&=-\beta g_t^2 S_tI_{u,t},\\
    \frac{dI_{u,t}}{dt}&=\beta g_t^2 S_tI_{u,t}-\gamma (1+\frac{1}{\eta})I_{u,t},\\
    \frac{d I_{d,t}}{dt}&=\frac{\gamma}{\eta} I_{u,t}-\gamma I_{d,t},\\
    \frac{dP(\tau_{\alpha}>t)}{dt}&=-\beta g^{\alpha}_t g_t \kappa I_{d,t}P(\tau_{\alpha}>t),\\
    \frac{dM_t}{dt}&=\gamma \bigg(P(\tau_{\alpha}>t)-P\bigg(\tau_{\alpha}>t-\frac{1}{\gamma}\bigg)\bigg).
\end{aligned}
\end{equation}
New infections are caused by the interaction between the susceptibles and undetected infected. These new infections are intially always assumed to be undetected. Then, some of the undetected infected  move to $I_{d,t}$ due to the detection density $\frac{\gamma}{\eta}$. 

The control formulation now has constant delays in state variable $P_t (:=P(\tau_{\alpha}>t))$ for both the objective functional and state equations. These types of control problems are called Retarded Optimal Control Problems (ROCP). We shall use a version of the minimum principle for this ROCP (see theorem 4.2 in \cite{c13}). Although one can in principle also include the planner's control in this more elaborate model, the resulting profile is rather messy and unwieldy. This joint equilibrium profile can be derived in an analogous manner as in section \ref{B} and is omitted. We shall assume a control on the part of the planner and present the result for only the resulting population equilibrium under this control.

\begin{theorem}\label{thm3}
 The population game with detection has the following equilibrium profile:
\begin{equation}
\label{detpopeq}
  \begin{aligned}
 g_{eq,t}&=\min \bigg(\frac{B ((1-1/\eta)P_{t-1/\gamma}+M_t/\eta)}{\beta P_t \kappa I_{d,t}(C-\lambda_{t})},1\bigg)\one_{\{C>\lambda_{t}\}}\\
&+ \one_{\{C\leq \lambda_{t}\}},
  \end{aligned}
 \end{equation}

The equilibrium dynamics is given by (\ref{detdynamics}). Additionally, the equation for the adjoint variable $\lambda_t$ is given by
 \begin{equation*}
 \begin{aligned}
    \frac{d\lambda_{t}}{dt}&=- (\beta g_{eq,t}^2 \kappa I_{d,t}(C-\lambda_{t}) +\frac{\gamma}{\eta}(T-t))\\
     &+\one_{[0,T-1/\gamma]}(t)\bigg(B g_{eq,t+\frac{1}{\gamma}}\bigg(1-\frac{1}{\eta}\bigg)+\frac{\gamma}{\eta}\bigg(T-t-\frac{1}{\gamma}\bigg)\bigg),
  \end{aligned}
 \end{equation*}
 with  boundary conditions: $S_0=1-\epsilon,I_{u,0}=\epsilon,I_{d,0}=0,\lambda_T=-R.$ 
\end{theorem}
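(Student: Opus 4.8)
The plan is to mirror the proof of Theorem \ref{thm1}, but to replace the ordinary Pontryagin principle by the retarded minimum principle (Theorem 4.2 in \cite{c13}), since the constant delay $1/\gamma$ now enters the variable $P_t$ in both the running cost and the $\dot M_t$ equation. First I would cast the canonical individual's problem — with the rest of the population frozen at $g_{eq}$ and the planner's threshold taken non-binding, i.e.\ capped at $1$ — as a retarded optimal control problem with state $(S_t,I_{u,t},I_{d,t},P_t,M_t)$ governed by (\ref{detdynamics}) and the detection objective functional. The infection cost $C\,P(\tau_\alpha\le T)=C(1-P_T)$ is purely terminal, but using $\dot P_t=-\beta g^{\alpha}_t g_t\kappa I_{d,t}P_t$ together with $P_0=1$ I would rewrite it as the running cost $\int_0^T C\beta g^{\alpha}_t g_{eq,t}\kappa I_{d,t}P_t\,dt$; this leaves only $-R\,P_T$ as a terminal cost, which is exactly what produces the stated transversality $\lambda_T=-R$.

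Next I would form the Hamiltonian $H=L+p_S\dot S+p_{I_u}\dot I_{u}+p_{I_d}\dot I_{d}+\lambda_t\dot P+p_M\dot M$ and minimise it pointwise over $g^{\alpha}_t$. As in Theorem \ref{thm1}, $H$ is affine in $g^{\alpha}_t$, so the unconstrained minimiser is bang--bang and an interior equilibrium can occur only at the indifference point where the coefficient of $g^{\alpha}_t$ vanishes after imposing the symmetric Nash condition $g^{\alpha}_{eq}=g_{eq}$. Collecting that coefficient — the benefit term $-B\big((1-1/\eta)P_{t-1/\gamma}+M_t/\eta\big)$, the folded infection term $+C\beta g_{eq,t}\kappa I_{d,t}P_t$, and the survival-dynamics term $-\lambda_t\beta g_{eq,t}\kappa I_{d,t}P_t$ — and setting it to zero gives $g_{eq,t}=\dfrac{B((1-1/\eta)P_{t-1/\gamma}+M_t/\eta)}{\beta P_t\kappa I_{d,t}(C-\lambda_t)}$, capped at $1$; the indicators on $\{C\gtrless\lambda_t\}$ encode the sign of $C-\lambda_t$ and the bang--bang fallback $g_{eq,t}=1$, exactly as before. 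This reproduces (\ref{detpopeq}).

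Then I would derive the $P$-costate equation from the retarded adjoint rule
\[
\dot\lambda_t=-\frac{\partial H}{\partial P(t)}-\one_{[0,T-1/\gamma]}(t)\Big[\tfrac{\partial H}{\partial P(t-1/\gamma)}\Big]_{t+1/\gamma},
\]
the defining feature of the ROCP: since $P$ enters with lag in both the benefit integrand and $\dot M_t=\gamma(P_t-P_{t-1/\gamma})$, the \emph{advanced} term, evaluated at $t+1/\gamma$ and gated by $\one_{[0,T-1/\gamma]}(t)$, is precisely what generates the delayed contributions $B g_{eq,t+1/\gamma}(1-1/\eta)$ and the shifted $\tfrac{\gamma}{\eta}(T-t-1/\gamma)$. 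The current derivative $-\partial H/\partial P(t)$ supplies $-\beta g_{eq,t}^2\kappa I_{d,t}(C-\lambda_t)-\gamma p_M(t)$. The key remaining work is to pin down the auxiliary costate $p_M$ of the accumulator $M_t$: it feeds $\partial H/\partial P$ through the $\pm\gamma p_M P$ terms of $\dot M$, and it is this channel — evaluated both currently and at the advanced time $t+1/\gamma$, with terminal condition $p_{M,T}=0$ — that records the $\tfrac{\gamma}{\eta}(T-t)$ contributions in the statement. One then checks that $p_S,p_{I_u},p_{I_d}$ enter neither the coefficient of $g^{\alpha}_t$ nor $\partial H/\partial P$, so they decouple from both the control law and the $\lambda$-equation; assembling the current and advanced pieces gives the stated $\dot\lambda_t$, while the state equations are just (\ref{detdynamics}).

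The hard part will be the rigorous invocation of the retarded minimum principle rather than the algebra: one must verify the regularity and constraint-qualification hypotheses of Theorem 4.2 in \cite{c13} for the delayed, state-dependent-control problem, and correctly handle the boundary/transversality conditions on the terminal strip $[T-1/\gamma,T]$ where the advanced term switches off — this is the origin of the $\one_{[0,T-1/\gamma]}(t)$ indicator and the correct evaluation of the $M$-costate near $T$, which is where I expect the bookkeeping to be most delicate. A secondary subtlety, inherited from Theorems \ref{thm1}--\ref{thm2}, is the equilibrium-selection step: because $H$ is affine in the control, the interior profile is singled out only through the symmetric indifference condition, and the fixed-point nature of the solution (the control depends on $(P_t,M_t,I_{d,t},\lambda_t)$, which in turn depend on the control via (\ref{detdynamics})) is assumed well posed rather than constructed.
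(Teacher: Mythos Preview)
Your proposal is correct and follows essentially the same route as the paper: apply the retarded minimum principle of \cite{c13} to the individual's ROCP with state $(S,I_u,I_d,P,M)$, form the Hamiltonian, minimise in $g^\alpha$ (affine, so the symmetric indifference condition pins down the interior profile), then solve the $M$-costate explicitly to obtain $p_M(t)=\tfrac{\gamma}{\eta}(T-t)$ and feed it into the retarded adjoint equation for $\lambda_t$. Your identification of the advanced term as the source of the $\one_{[0,T-1/\gamma]}$ contributions, and of the decoupling of $p_S,p_{I_u},p_{I_d}$, matches the paper exactly; if anything, your write-up is more explicit about the bookkeeping on the terminal strip and the fixed-point caveat than the paper's own proof.
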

\begin{proof}
We apply theorem 4.2 from \cite{c13} to the ROCP of the individual $\alpha$. Compared to Theorems \ref{thm1} \& \ref{thm2} the Hamiltonian also incorporates the delayed state variables. Consequently, the adjoint equations and the optimal control depend on these delayed variables.The Hamiltonian is given by:
    \begin{equation*}
     \begin{aligned}
        H(S_t,I_{d,t},&I_{u,t},P_t,M_t,P_{t-1/\gamma},\lambda_{t}):=C\beta g^{\alpha}_t g_t \kappa I_{d,t}P_t\\
        &-B g^{\alpha}_t \bigg(P_{t-1/\gamma}\bigg(1-\frac{1}{\eta}\bigg)+\frac{1}{\eta}M_t\bigg)\\
        &+\lambda_{t}(-\beta g^{\alpha}_t g_t \kappa I_{d,t}P_t)+\lambda_{1,t}(-\beta g_t^2 S_t I_{u,t})\\
        &+\lambda_{2,t}\bigg(\beta g_t^2 S_t I_{u,t}-\gamma\bigg(1+\frac{1}{\eta}\bigg)I_{u,t}\bigg)\\
        &+\lambda_{3,t}\bigg(\frac{\gamma}{\eta} I_{u,t}-\gamma I_{d,t}\bigg)+\lambda_{4,t}(P_t-P_{t-1/\gamma})
        \end{aligned}
    \end{equation*}
Minimising the Hamiltonian as a function of $g^{\alpha}$ and using $g_{eq,t}=g^{\alpha}_{eq,t},$ we get 
  \begin{equation*}
   \begin{aligned}
  g^{\alpha}_{eq,t}=\underset{g^{\alpha}_t \in[0,1]}{\arg \min} \hspace{0.05cm}& -B g^{\alpha}_t \bigg(\frac{P_{t-1/\gamma}(\eta-1)+M_t}{\eta}\bigg) \\
&+(C-\lambda_t)g^{\alpha}_t g_t \kappa I_{d,t}P_t, \\
   \end{aligned}
  \end{equation*}
\begin{equation*}
\begin{aligned}
g^{\alpha}_{eq,t}=& \min \bigg(\frac{B ((1-1/\eta)P_{t-1/\gamma}+M_t/\eta)}{\beta P_t \kappa I_{d,t}(C-\lambda_{t})},1\bigg)\one_{\{C>\lambda_{t}\}}\\
&+ \one_{\{C\leq \lambda_{t}\}}
\end{aligned}
\end{equation*}
which is the profile in (\ref{detpopeq}). As the exposure profile depends only on $\lambda_{t}$, which in turn depends on $\lambda_{4,t}$, we consider differential equations of only these two variables. We can explicitly solve for $\lambda_{4,t}$ with the condition $\lambda_{4,T}=0$. We have 
\begin{equation*}
\begin{aligned}
& \frac{d \lambda_{4,t}}{dt}=-\frac{\partial H}{\partial M_t}     \\
& \frac{d \lambda_{4,t}}{dt}=-\frac{\gamma}{\eta} \\
& \lambda_{4,t}=\frac{\gamma}{ \eta}(T-t).
\end{aligned}
\end{equation*}
The differential equation for $\lambda_t$  is then given by:
\begin{equation*}
\begin{aligned}
 \frac{d \lambda_{t}}{dt}=&-\frac{\partial H}{\partial P_t}-\one_{[0,T-1/\gamma]}(t)   \frac{\partial H}{\partial P_{t-1/\gamma}}  \\
=&- (\beta g_{eq,t}^2 \kappa I_{d,t}(C-\lambda_{t}) +\frac{\gamma}{\eta}(T-t))+\one_{[0,T-1/\gamma]}\\
     &\bigg(B g_{eq,t+\frac{1}{\gamma}}\bigg(1-\frac{1}{\eta}\bigg)+\frac{\gamma}{\eta}\bigg(T-t-\frac{1}{\gamma}\bigg)\bigg).
\end{aligned}
\end{equation*}
\end{proof}
\section{SIMULATION RESULTS}
The theorems derived in section III provide a basis for simulating a dynamical system with initial infection. The equilibrium solution to the game leads to solving a system a differential equations with a two point boundary condition. 
This is fairly typical in optimal control and is due to the PMP. The equilibrium in the model with detection leads to a two point boundary value problem in a system of advanced-delay differential equations. We only approximately solve this system 
by using a cubic extrapolation for the advanced term (see \cite{c13,c17} for other numerical examples).

The boundary value problem was solved using a shooting approach coupled with an initial value differential and delay-differential equation solver. This then reduces the problem to solving a nonlinear problem (see \cite{c18}) of finding the appropriate initial values for the adjoint variables. 

\begin{table}[!ht]
\centering 
\begin{tabular}{c c } 
\hline 
Parameter & Value \\  
\hline 
T & 400\\
$\epsilon$ & 1.65e-08 \\ 
B & 0.01\\
C & 1 \\
R & 0\\
$\beta$ & 0.2\\ 
$\gamma$ & 0.05\\
\hline 
\end{tabular}
\label{table:params} 
\caption{Parameter values used in simulations.}
\end{table}
The values for the various parameters in the simulation are given in Table I above. The parameters $\eta$, the probability of detection, and $\kappa$, trust in detected numbers, are varied to give various scenarios shown in figures \ref{fig:detsus},\ref{fig:detexp} and \ref{fig:detinf}. 

\begin{figure}[h!]
  \includegraphics[width=\linewidth]{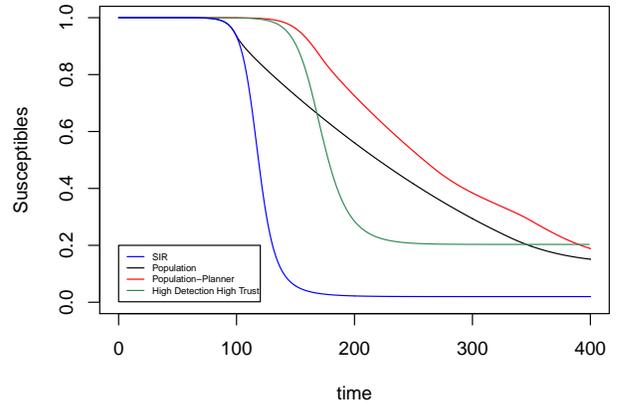}
  \caption{Susceptible fraction for SIR (blue), pure population equilibrium (black), planner-population equilibrium (red) and population equilbirium with high detection \& trust (green). }
  \label{fig:mainsus}
\end{figure}

\begin{figure}[h!]
  \includegraphics[width=\linewidth]{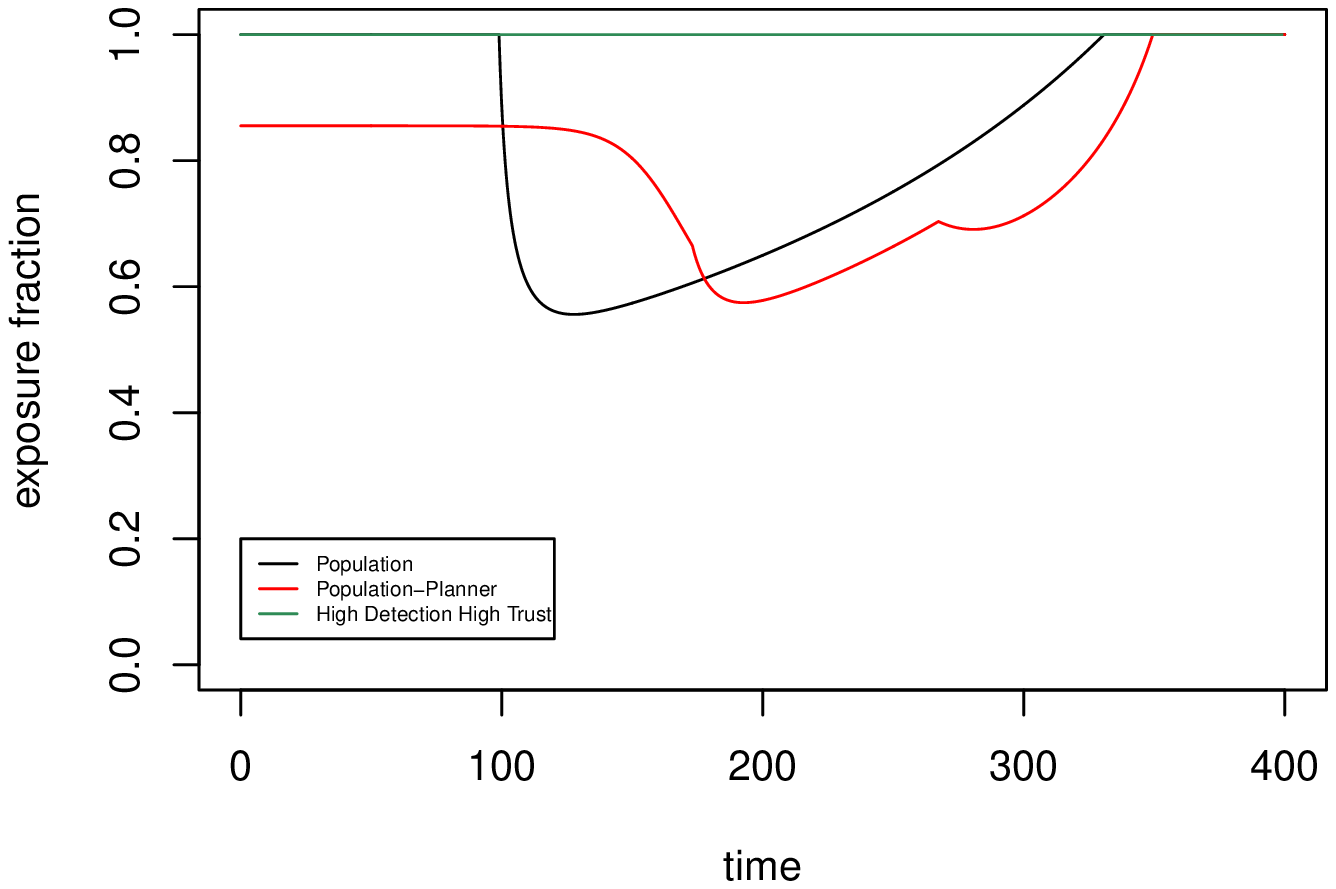}
  \caption{Exposure strategies for cases shown in figure 1.}
  \label{fig:mainexp}
\end{figure}

\begin{figure}[h!]
  \includegraphics[width=\linewidth]{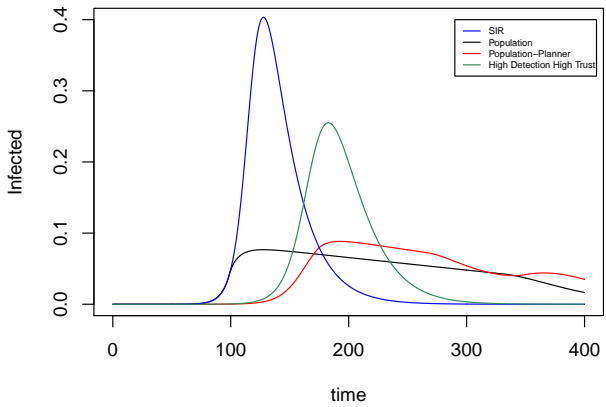}
  \caption{Infection fraction for cases shown in figure 1.}
  \label{fig:maininf}
\end{figure}

In figures \ref{fig:mainsus}, \ref{fig:mainexp} and \ref{fig:maininf}, we plot the susceptible fraction, exposure and infected fraction, respectively, versus time. SIR (blue) shows an exponential decrease in susceptibles at peak infection with almost no susceptibles remaining at the end. This is the worst case scenario- no social distancing, no detection, no quarantining and no planner control. It has the highest peak infection whose onset is advanced compared to other scenarios. 

The case with population equilibrium (black) shows a prolonged infection peak with much slower decrease in susceptible numbers (vis-a-vis SIR). From figure \ref{fig:mainexp} it is clear the population starts to socially distance as the infection numbers increase and only stops doing so when it is close to the vaccine arrival time (T).

In the case with planner control (red) the planner initially sets a moderate threshold (see figure \ref{fig:mainexp}) to control the spread of the disease. This results in a delayed infection peak. As the infection numbers inevitably rise the population voluntarily reduce exposure below even the planner's threshold. This results in the peak infection becoming plateaued in a manner similar to population equilibrium. The social distancing and thresholding is gradually reduced as we approach the vaccine arrival (which is assumed to instantly stop the infection). Compared to the population equilibrium case the peak infection is delayed and the economic impact (as measured by exposure time) is reduced.

The case with high detection ($\eta=1$) and high trust ($\kappa=1$) leads to significantly lower peak than SIR but unlike the population or population planner cases the peak is not prolonged (though the peak infection itself is higher). There is no social distancing due to high levels of trust, detection and quarantining. The total susceptible surviving at the end is similar to population or population-planner case. This seems to be the most preferable case where the peak infection is delayed and not prolonged and the economic impact minimal assuming the higher peak infection can be managed.

\begin{figure}[h!]
  \includegraphics[width=\linewidth]{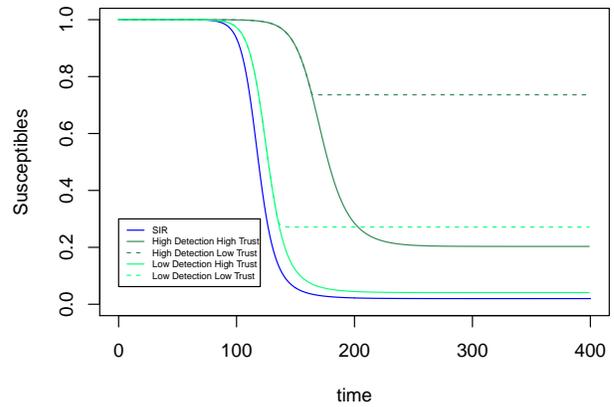}
  \caption{Susceptible fraction for SIR (blue), population equilibrium with high detection \& trust (dark green), with high detection \& low trust (dashed dark-green), with low detection \& high trust (light green) and low detection \& low trust (dashed light green).  }
  \label{fig:detsus}
\end{figure}

\begin{figure}[h!]
  \includegraphics[width=\linewidth]{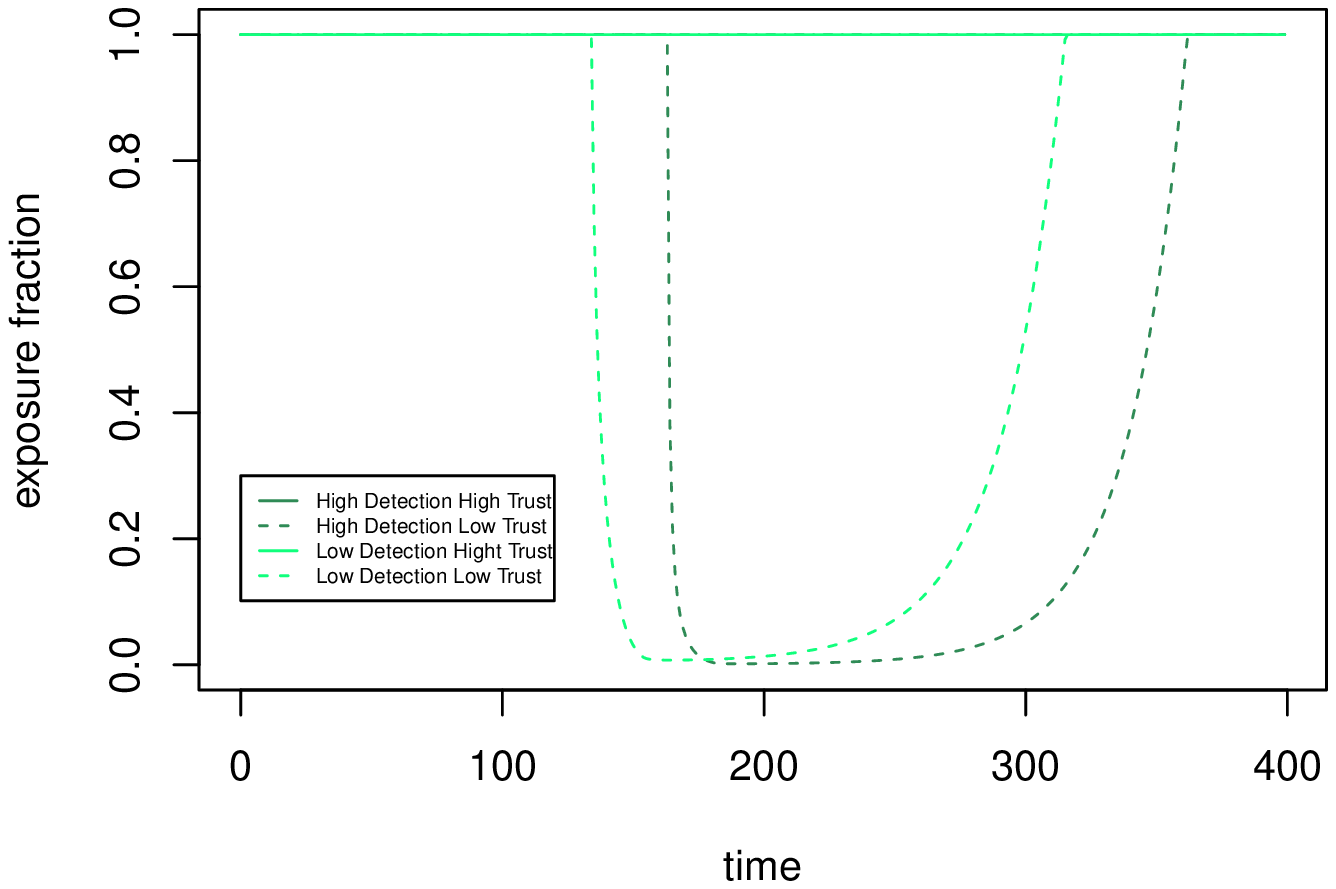}
  \caption{Exposure strategies  for cases shown in figure 4. }
  \label{fig:detexp}
\end{figure}

\begin{figure}[h!]
  \includegraphics[width=\linewidth]{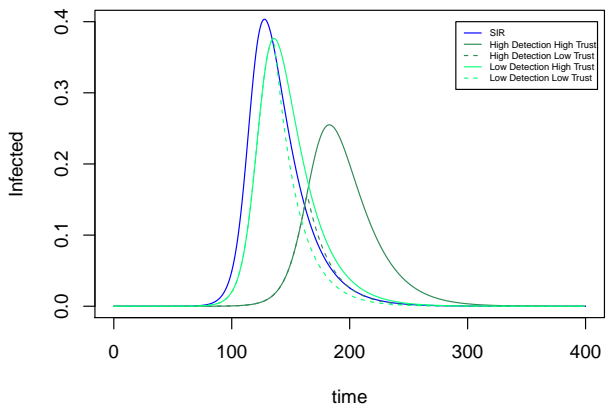}
  \caption{Infection fraction for cases shown in figure 4.}
  \label{fig:detinf}
\end{figure}

In figures \ref{fig:detsus}, \ref{fig:detexp} and \ref{fig:detinf}, we plot the effects of different detection rates ($\eta$) and trust parameters ($\kappa$) on the spread of disease. We have already discussed the case with high detection and high trust in the paragraph above. In the case where the detection rates are high ($\eta=1$) but trust ($\kappa=32$) is low (dashed dark green),  then we observe that (see figure \ref{fig:detexp}) as soon as infected numbers peak the population completely reduces exposure to zero. This completely stops the disease spread. This is due to the low trust in the detected numbers. The population believes the planner is doing a poor job of the detection even though in reality the detection rates are high. This leads to unnecessary loss of exposure benefits. 

In the low detection ($\eta =5$) but high trust ($\kappa=1$) (light green) case, the disease spread curve is very close to the SIR situation. This is expected since in this case there is poor detection and yet the population trusts the detected numbers are an accurate measure of disease spread. This leads to the undetected infected comprising the entirety of the infected numbers while the population seeing the low detection numbers chooses not to socially distance. This is similar to the SIR situation where there is no planner control nor any social distancing.

Finally in the low detection ($\eta=5$)  and low trust ($\kappa=32$) (dashed light green) case, just as in the high detection low trust case, the population reduces exposure completely as soon as the infection numbers start to peak. However, compared to the high detection setting the total infected numbers is higher because most of the infected aren't detected and help spread the disease. 

The various comparisons seem to suggest ideally for controlling disease spread one needs to have high detection rates with transparency in the reported numbers so that the population has confidence in the reported numbers. If the detection numbers are low and the planner tries to underplay the poor detection it could result in the worst possible scenario- an unmitigated disease spread which could stress the medical resources at peak infection. In practice, it is often difficult to achieve very high detection rates due to asymptomatic carriers, testing errors etc. and hence must be combined with some moderate amount of planner control in form of lockdowns and imposing social restrictions. The population also has an important role to play by voluntarily reducing exposure and other NPIs (wearing masks, sanitisation, adhering to restrictions etc.). The role of trust/confidence in the reported numbers is demonstrated - too little confidence can lead to a panic, societal intermingling can stop leading to other adverse effects like economic collpase and too much confidence can lead to scenarios where the confidence is unfounded, leading to unmitigated spread of the dissease.
\section{CONCLUSION }
A unified game theoretic framework incorporating the interventions of a planner, behavioral choices of individuals, detection rates and trust in reported numbers has been developed. Both the planner and population begin to favor moderate social distancing when the infection numbers begin to peak. The detection and the subsequent trust in these reported numbers also play crucial role in the spread of disease. Too little detection coupled with unfounded confidence can lead to an unmitigated spread of the disease while too little confidence when the detection is reasonably high leads to unnecessary loss of economic activity. Simulation results supporting these conclusions are presented.
\section{ACKNOWLEDGMENTS}

The author acknowledges the useful discussions he had with Prof. Sandeep K. Juneja on this topic.
This work was  supported by the Department of Atomic Energy, Government  of India, under project no. RTI4001.


\clearpage

\noindent
\textbf{Appendix A}
\\
\textit{Proof of} (\ref{adjinterp}).\\ 
In what follows we have set $P_t:=\mathbb{P}(\tau>t)$. Define $Z_t$, $\forall t \in [0,T]$ as follows :
\begin{equation*}
\begin{aligned}
Z_t&:=\mathbb{E}\bigg [\int \displaylimits_{t}^{\tau \wedge T}B g_s ds-(C+R) \one_{\{\tau \leq T\}}\bigg| \tau >t \bigg ]\\
&=\int \displaylimits_{0}^{\tau \wedge T-t} g_{r+t} \frac{P_{t+r}}{P_t}[B-\beta (C+R)g_{r+t}I_{r+t}]dr
\end{aligned}
\end{equation*}
 Observe that $Z_0$ is the loss functional $J_{\alpha}$. For any time $t<T$ we have:
\begin{equation*}
\begin{aligned}
Z_0=&\mathbb{E}\bigg [\int \displaylimits_{0}^{\tau \wedge T}B g_s ds-(C+R) \one_{\{\tau \leq T\}}\bigg]\\
=&\mathbb{E}\bigg[(\int \displaylimits_{0}^{\tau \wedge T} B g_sds -(C+R) \one_{\{\tau \leq T\}})\one_{\{\tau \leq t\}}\bigg]+\\
& \mathbb{E}\bigg[\int \displaylimits_{0}^{\tau \wedge T}B f_s ds-(C+R) \one_{\{\tau \leq T\}} \bigg| \tau >t\bigg]P_t\\
=&H_t+P_t(\int_{0}^{t}B g_s ds)+P_tZ_t
\end{aligned}
\end{equation*}
where $H_t$ is defined as 
\begin{equation*}
H_t:=\mathbb{E}[(\int_{0}^{\tau \wedge T} B g_sds -(C+R) \one_{\{\tau \leq T\}})\one_{\{\tau \leq t\}}]
\end{equation*}
Now since time $t$ was arbitrary we can write the same expression for $t+\Delta t<T$  with $\Delta t>0$. Thus:
\begin{equation*}
\begin{aligned}
H_t+P_t (\int \displaylimits_{0}^{t}B f_s ds)+P_t Z_t=&H_{t+\Delta t}+P_{t+\Delta t}(\int \displaylimits_{0}^{t+\Delta t}B g_s ds)\\
&+P_{t+\Delta t}Z_{t+\Delta t}
\end{aligned}
\end{equation*}
Solving for $Z_t$ we get:
\begin{equation*}
\begin{aligned}
Z_t&=\frac{(H_{t+\Delta t}-H_t)}{P_t}+\bigg [\frac{P_{t+\Delta t}}{P_t}(\int \displaylimits_{0}^{t+\Delta t}B g_s ds)-\int_{0}^{t}B g_s ds\bigg]\\
&+\frac{P_{t+\Delta t}}{P_t}Z_{t+\Delta t}
\end{aligned}
\end{equation*}
Plugging in the definition of $H_t$ and doing some straightforward but tedious algebra we get:
\begin{equation*}
\begin{aligned}
Z_t=&\frac{\mathbb{E}[\one_{\{t<\tau \leq t+\Delta t\}} \int_{t}^{\tau} B g_s ds]}{P_t}+\frac{P_{t+\Delta t}}{P_t}(\int_{t}^{t+\Delta t}  B g_s ds) \\
&+\frac{P_{t+\Delta t}}{P_t} Z_{t+\Delta t} -(C+R) \frac{\mathbb{E}_{\tau}[\one_{\{t<\tau \leq t+\Delta t\}} ]}{P_t}
\end{aligned}
\end{equation*}
Zubtracting $	Z_{t+\Delta t}$ on both sides and dividing by $\Delta t$ we get:
\begin{equation*}
\begin{aligned}
&\underbrace{\frac{Z_t-Z_{t+\Delta t}}{\Delta t}}_\text{A}=\underbrace{\frac{\mathbb{E}[\one_{\{t<\tau \leq t+\Delta t\}} \int_{t}^{\tau} B g_s ds]}{P_t\Delta t}}_\text{B} +\underbrace{\bigg[\frac{\frac{P_{t+\Delta t}}{P_t}-1}{\Delta t}\bigg] Z_{t+\Delta t}}_\text{C}\\
&+\underbrace{\frac{P_{t+\Delta t}}{P_t} Z_{t+\Delta t}\bigg[\frac{\int_{t}^{t+\Delta t} B g_s ds}{\Delta t}\bigg]}_\text{D} -\underbrace{(C+R) \frac{\mathbb{E}[\one_{\{t<\tau \leq t+\Delta t\}}]}{P_t\Delta t}}_\text{E}
\end{aligned}
\end{equation*}
We will analyse this in the limit $\Delta t \to 0$ term by term. For term $B$ we observe that:
\begin{equation*}
\begin{aligned}
\mathbb{E}[\one_{\{t<\tau \leq t+\Delta t\}} \int_{t}^{\tau} B g_s ds] &\leq B \mathbb{E}[\one_{\{t<\tau \leq t+\Delta t\}} (\tau-t)]\\
& \leq B \mathbb{P}(t<\tau \leq t+\Delta t) \Delta t
\end{aligned}
\end{equation*}
Hence:
\begin{equation*}
\underset{\Delta t \to 0}{lim} B=0
\end{equation*}
For term C we note that $\frac{1}{P_t}\frac{dP_t}{dt}=\underset{\Delta t \to 0}{lim}\frac{\mathbb{P}(\tau \leq t+\Delta t|\tau> t)}{\Delta t}$ to get:
\begin{equation*}
\underset{\Delta t \to 0}{lim} C=-(\beta g_t^2 I_t)Z_t
\end{equation*}
For term D from Mean Value Theorem for integrals we have for  some $t \leq s' <t+\Delta t$:
\begin{equation*}
\frac{\int_{t}^{t+\Delta t} B g_s ds}{\Delta t}=B g_{s'}
\end{equation*}
This combined with continuity of $g_t$ and the fact $\mathbb{P}(\tau>t+\Delta t|\tau> t) \to 1$ as $\Delta t \to 0$ gives us :
\begin{equation*}
\underset{\Delta t \to 0}{lim} D=B g_t
\end{equation*}
Term E can be evaluated in a manner similar to term C:
\begin{equation*}
\underset{\Delta t \to 0}{lim} E=(C+R) \beta g^2_t I_t
\end{equation*}
This shows that the limit in term A exists and is equal to $\frac{dZ_t}{dt}$. Hence we have the differential equation:
\begin{equation*}
-\frac{dZ_t}{dt}=B g_t-\beta g^2_t I_tZ_t-(C+R) \beta g^2_t I_t
\end{equation*}
We also observe from definition of $Z_t$ that $Z_T=0$. This is the same equation as the adjoint variable $\lambda_t$ in Theorem 1 with the relation $\lambda_t=-Z_t-R$. 

\begin{thebibliography}{99}

\bibitem{c1}
Kermack, W. O., \& McKendrick, A. G. (1927). A contribution to the mathematical theory of epidemics. Proceedings of the royal society of london. Series A, Containing papers of a mathematical and physical character, 115(772), 700-721.

\bibitem{c2}
Brauer, F., Driessche, P. D., \& Wu, J. (2008). Lecture notes in mathematical epidemiology. Berlin, Germany. Springer, 75(1), 3-22.

\bibitem{c3}
Beckley, R., Weatherspoon, C., Alexander, M., Chandler, M., Johnson, A., \& Bhatt, G. S. (2013). Modeling epidemics with differential equation. Tennessee State University Internal Report.

\bibitem{c4}
Gao, S., Teng, Z., Nieto, J. J., \& Torres, A. (2007). Analysis of an SIR epidemic model with pulse vaccination and distributed time delay. Journal of biomedicine \& biotechnology, 2007, 64870. https://doi.org/10.1155/2007/64870

\bibitem{c5}
Aldila, D., Nuraini, N., \& Soewono, E. (2014). Optimal control problem in preventing of swine flu disease transmission. Applied Mathematical Sciences, (69-72), 3501-3512. https://doi.org/10.12988/ams.2014.44275

\bibitem{c6}
Aldila, D., Götz, T., \& Soewono, E. (2013). An optimal control problem arising from a dengue disease transmission model. Mathematical biosciences, 242(1), 9–16. https://doi.org/10.1016/j.mbs.2012.11.014

\bibitem{c7}
Álvarez, F., Argente, D., \& Lippi, F. (2020). A Simple Planning Problem for COVID-19 Lockdown. CEPR: International Macroeconomics \& Finance (Topic).

\bibitem{c8}
Acemoglu, D., Chernozhukov, V., Werning, I., \& Whinston, M. D. (2020). A multi-risk SIR model with optimally targeted lockdown (Vol. 2020). Cambridge, MA: National Bureau of Economic Research.

\bibitem{c9}
Scott R Sheffield, Anna York, Nicole A Swartwood, Alyssa Bilinski, Anne Williamson, and Meagan C Fitzpatrick. (2020). Strict physical distancing may be more efficient: A mathematical argument
for making lockdowns count. medRxiv.

\bibitem{c10}Farboodi, M., Jarosch, G., \& Shimer, R. (2021). Internal and external effects of social distancing in a pandemic. Journal of Economic Theory, 196, 105293.

\bibitem{c11}Toxvaerd, F.M.O, 2020. "Equilibrium Social Distancing," Cambridge Working Papers in Economics 2021, Faculty of Economics, University of Cambridge.

\bibitem{c12}Friesz, T. L. (2010). Dynamic optimization and differential games (Vol. 135). Springer Science \& Business Media.

\bibitem{c13}Göllmann, L., Kern, D. and Maurer, H. (2009), Optimal control problems with delays in state and control variables subject to mixed control–state constraints. Optim. Control Appl. Meth., 30: 341-365. https://doi.org/10.1002/oca.843

\bibitem{c14}Chachuat, B. (2007). Nonlinear and dynamic optimization: From theory to practice (No. LECTURE).

\bibitem{c15}Liberzon, D. (2011). Calculus of variations and optimal control theory. Princeton university press.

\bibitem{c16}Ince, E. L. (1956), Ordinary Differential Equations, New York: Dover Publications.

\bibitem{c17}Chi, H., Bell, J. \& Hassard, B. (1986), Numerical solution of a nonlinear advance-delay-differential equation from nerve conduction theory. J. Math. Biology 24, 583–601. https://doi.org/10.1007/BF00275686

\bibitem{c18}Keller, H. B. (1976). Numerical Solution of Two Point Boundary Value Problems. United States: Society for Industrial and Applied Mathematics (SIAM, 3600 Market Street, Floor 6, Philadelphia, PA 19104).
\end{thebibliography}
\end{document}